\newtheorem{theorem}{Theorem}
\newtheorem{proposition}[theorem]{Proposition}
\newtheorem{definition}{Definition}
\begin{document}

\title{Reducing QUBO Density by Factoring Out Semi-Symmetries}

\author{Jonas Nüßlein\sup{1}, Leo Sünkel\sup{1}, Jonas Stein\sup{1}, Tobias Rohe\sup{1}, Daniëlle Schuman\sup{1}, Sebastian Feld\sup{2}, Corey O'Meara\sup{3}, Giorgio Cortiana\sup{3} and Claudia Linnhoff-Popien\sup{1}
\affiliation{\sup{1}Institute of Computer Science, LMU Munich, Germany}
\affiliation{\sup{2}Quantum \& Computer Engineering Department, Delft University of Technology, The Netherlands}
\affiliation{\sup{3}E.ON Digital Technology GmbH, Germany}
\email{jonas.nuesslein@ifi.lmu.de}
}

\keywords{QAOA, Quantum Annealing, QUBO, Couplings, Symmetry, Ising, Circuit Depth}

\abstract{
Quantum Approximate Optimization Algorithm (QAOA) and Quantum Annealing are prominent approaches for solving combinatorial optimization problems, such as those formulated as Quadratic Unconstrained Binary Optimization (QUBO). These algorithms aim to minimize the objective function $x^T Q x$, where $Q$ is a QUBO matrix. However, the number of two-qubit CNOT gates in QAOA circuits and the complexity of problem embeddings in Quantum Annealing scale linearly with the number of non-zero couplings in $Q$, contributing to significant computational and error-related challenges.  
To address this, we introduce the concept of \textit{semi-symmetries} in QUBO matrices and propose an algorithm for identifying and factoring these symmetries into ancilla qubits. \textit{Semi-symmetries} frequently arise in optimization problems such as \textit{Maximum Clique}, \textit{Hamilton Cycles}, \textit{Graph Coloring}, and \textit{Graph Isomorphism}. We theoretically demonstrate that the modified QUBO matrix $Q_{\text{mod}}$ retains the same energy spectrum as the original $Q$. Experimental evaluations on the aforementioned problems show that our algorithm reduces the number of couplings and QAOA circuit depth by up to $45\%$. For Quantum Annealing, these reductions also lead to sparser problem embeddings, shorter qubit chains and better performance. This work highlights the utility of exploiting QUBO matrix structure to optimize quantum algorithms, advancing their scalability and practical applicability to real-world combinatorial problems. 
}

\onecolumn \maketitle \normalsize \setcounter{footnote}{0} \vfill

\section{\uppercase{Introduction}}

The Quantum Approximate Optimization Algorithm (QAOA) \cite{farhi2014quantum} is designed to tackle combinatorial optimization problems using quantum computers by preparing a quantum state that maximizes the expectation value of the cost-hamiltonian. QAOA is widely recognized as a prime contender for showcasing quantum advantage on Noisy Intermediate-Scale Quantum (NISQ) devices \cite{zou2023multiscale}. It aims to approximate the ground state of a given physical system, often referred to as the Hamiltonian. However, its successful implementation faces challenges due to the high error rates inherent in current near-term quantum devices, which lack full error correction capabilities.

Utilizing QAOA to solve a problem entails a two-step process. Initially, the problem is translated into a parametric quantum circuit consisting of $p$ layers each consisting of $2$ adjustable parameters, where $p$ is a hyperparameter that needs to be set manually. This circuit is then run for thousands of trials. Subsequently, a classical optimizer utilizes the expectation value of the output distribution to refine the parameters. This iterative process continues until the optimal parameters for the circuit are determined. The cost function, which QAOA tries to minimize is usually represented as a Quadratic Unconstrained Binary Optimization (QUBO) problem.

The quantity of two-qubit CNOT operations within a QAOA circuit is equal to $2C \cdot p$ where $C$ is the number of non-zero couplings in the QUBO (number of edges in the problem graph). However, CNOT operations are susceptible to errors and often lead to prolonged runtimes. For instance, on the Google Sycamore quantum processor \cite{ayanzadeh2023frozenqubits}, CNOTs exhibit an average error-rate of $1\%$. Furthermore, CNOT gates might require additional SWAP gates since control- and target-qubit might not be connected on the hardware chip. Therefore, minimizing the number of these operations becomes crucial to improve the efficiency and accuracy of quantum optimization algorithms like QAOA.

In this paper, we therefore propose a method for using ancilla qubits to reduce the number of non-zero couplings and therefore also the number of CNOT operations and the depth of the QAOA circuit. We will also show that these sparser (but larger) QUBO matrices are easier to solve using Quantum Annealing since lower density leads to shorter physical qubit chains.

We present the concept of \textit{semi-symmetries} (see Definition 2) which we factor out into ancilla qubits. Our algorithm can therefore create different QUBO matrices that represent the same low-energy spectrum. To demonstrate the effectiveness of our approach, we tested it on four well-known optimization problems: Maximum Clique, Hamilton Cycles, Graph Coloring, and Graph Isomorphism. Our results show that our method can reduce the number of couplings and QAOA circuit depth by up to $45\%$, thus significantly improving the efficiency and scalability of Quantum Annealing and QAOA for solving a diverse range of NP-hard optimization problems.

\section{Background}

\subsection{Quadratic Unconstrained Binary Optimization}
Let $Q$ be a symmetric, real-valued $(n \times n)$-matrix and $x \in \mathbb{B}^n$ be a binary vector. Quadratic Unconstrained Binary Optimization (QUBO) ~\cite{zielinski2023pattern,roch2023effect} is an optimization problem of the form:
\begin{equation}
x^* = \underset{x}{argmin} \: H(x) = \underset{x}{argmin} \: \sum_{i=1}^{n}\sum_{j=i}^{n}{x_i\ x_j\ Q_{ij}}
\end{equation}

The function $H(x)$ is usually called \textit{Hamiltonian}. We will refer to the matrix $Q$ as the ``QUBO matrix''. The task is to find a binary vector $x$ that is as close as possible to the optimum which is known to be \textit{NP}-hard \cite{glover2018tutorial}. QUBOs attracted special attention recently since they can be solved using Quantum Optimization approaches like Quantum Annealing (QA) \cite{morita2008mathematical} or QAOA \cite{farhi2014quantum} which promises speed-ups compared to classical algorithms \cite{farhi2016quantum}. Numerous problems have already be encoded as a QUBO formulation \cite{bucher2023dynamic,zielinski2023influence,nusslein2023black}.

To solve a QUBO matrix using Quantum Annealing (QA), it must first be embedded onto a specialized graph ~\cite{4,10,12}. This process involves representing each logical qubit with multiple physical qubits. These physical qubits must be interconnected, forming what is known as a chain.

\subsection{QAOA}
The Quantum Approximate Optimization Algorithm (QAOA) is a hybrid quantum-classical algorithm proposed by Farhi et al. in 2014 \cite{farhi2014quantum} for solving combinatorial optimization problems. Let $C(x)$ be a cost function, where $x$ represents a binary string encoding a possible solution. The goal is to find the $x$ that minimizes $C(x)$. QAOA encodes this optimization problem into a quantum circuit, which can be parameterized by angles $\gamma$ and $\beta$. The quantum circuit prepares a quantum state $|\psi(\gamma, \beta)\rangle$ that represents a superposition of all possible solutions. The quantum circuit consists of alternating layers of two types of operators: the cost operator $U_C$ and the mixer operator $U_B$. The cost operator is responsible for encoding the cost function into the quantum state, while the mixer operator is responsible for exploring different solutions efficiently. The quantum state $|\psi(\gamma, \beta)\rangle$ prepared by the circuit is given by:
\[
|\psi(\gamma, \beta)\rangle = e^{-i\gamma_p U_B}e^{-i\beta_p U_C} \cdots e^{-i\gamma_1 U_B}e^{-i\beta_1 U_C}|+\rangle^{\otimes n}
\]
where $|+\rangle^{\otimes n}$ represents the initial state of $n$ qubits initialized to the superposition state, and $U_C$ and $U_B$ are the cost and mixer operators, respectively which are applied $p$ times. $p$ is a hyperparameter that needs to be manually specified. The parameters $\gamma$ and $\beta$ control the evolution of the quantum state.

The next step involves optimizing the parameters  $\gamma$ and $\beta$ to minimize the expectation value of the cost function. This optimization process is typically performed using classical optimization algorithms such as gradient descent or genetic algorithms. Given the quantum state $|\psi(\gamma, \beta)\rangle$, the expectation value of the cost function can be calculated as $E(\gamma, \beta) = \langle \psi(\gamma, \beta) | C | \psi(\gamma, \beta) \rangle
$. The goal is to find the optimal parameters $\gamma^*$ and $\beta^*$ that minimize $E(\gamma, \beta)$. This optimization process involves iteratively updating the parameters.

\subsection{Maximum Clique}
In graph theory, the Maximum Clique problem involves finding the largest subset of vertices $V' \subseteq V$ in a graph $G(V,E)$ such that every pair of vertices is connected by an edge. This problem has extensive applications across various domains, including social network analysis and bioinformatics~\cite{eblen2011maximum,rossi2015parallel}. To formulate the Maximum Clique problem as a QUBO problem, binary variables $x_i$ are used for each vertex $i$, where $x_i = 1$ indicates that vertex $i$ is included in the clique, and $x_i = 0$ otherwise. The Hamiltonian can therefore be written as:

\[
H(x) = \sum_i -x_i + A \cdot \sum_{(i,j) \in \overline{E}} x_i x_j 
\]

The second summand of $H$ enforces the solution to be a clique while the first summand rewards larger cliques \cite{lucas2014ising}.

\subsection{Hamilton Cycles}
Let $G(V,E)$ be a graph. The Hamilton Cycles problem asks if there is a path that starts from vertex $v_0$, visits every other vertice exactly once, and ends in vertex $v_0$ \cite{lucas2014ising}. This problem has practical applications in various fields, including logistics, transportation, and circuit design \cite{kawarabayashi2001survey,laporte2007locating}. To formulate this problem as a QUBO we introduce binary variables $x_{i,j}$ with $i \in [1..|V|]$ and $j \in [1..|V|]$. $x_{i,j} = 1$ iff vertex $i$ is at position $j$ of the cycle. The Hamiltonian can now be written as ~\cite{nusslein2022algorithmic}:

\begin{equation*}
\begin{aligned}
    H(x) &= \sum_i -x_i + A \cdot \sum_{i,j} \sum_{k,l} x_{i,j} x_{k,l} \cdot I[i=k \lor j=l \\ &\lor (l = j + 1 \land (i,k) \notin E) \lor (l = |V| - 1 \land j = 0 \\ &\land (i,k) \notin E)]
\end{aligned}
\end{equation*}

$H$ consists of three constraints: (1) each vertex must be visited (2) two vertices can't be at the same position in the cycle (3) two vertices can not be in neighboring positions of the cycle if there is no edge in the graph connecting them.

\subsection{Graph Coloring}
The Graph Coloring problem encompasses a wide range of applications from scheduling to register allocation in compilers, and even to radio frequency assignment in wireless communication networks \cite{ahmed2012applications}. At its core, the problem revolves around assigning colors to the vertices of a graph in such a way that no two adjacent vertices share the same color. Let $G=(V,E)$ be a graph, and $k$ be the number of available colors. To formulate this problem as a QUBO we introduce binary variables $x_{i,k}$ representing the assignment of color $k$ to vertex $i$ \cite{lucas2014ising}.

\begin{equation*}
\begin{aligned}
    H(x) &= \sum_{i,k} -x_{i,k} + A \cdot \sum_{i,k_1} \sum_{j,k_2} x_{i,k_1} x_{j,k_2} \cdot I[i=j \\ &\lor (k_1 = k_2 \land (i,j) \in E)]
\end{aligned}
\end{equation*}

$H$ encodes the two constraints that each vertex can only have one color and two adjacent vertices can't have the same color.

\subsection{Graph Isomorphism}
Graph Isomorphism (GI) is an important problem in graph theory that asks whether two graphs are structurally equivalent, albeit possibly differing in their vertex and edge labels. Formally, two graphs $G_1=(V_1, E_1)$ and $G_2=(V_2, E_2)$ are considered isomorphic if there exists a bijective mapping between their vertices such that their edge structures remain unchanged. In contrast to Maximum Clique, Hamilton Cycles and Graph Coloring, the complexity class for GI is still unknown (although it is expected to be in NP-intermediate) \cite{npintermediate}.

To formulate GI as a QUBO problem, we introduce binary variables $x_{i,j}$ representing the mapping of vertex $i$ of $G_1$ to vertex $j$ of $G_2$. The Hamiltonian can now be formulated as \cite{lucas2014ising}:

\begin{equation*}
\begin{aligned}
    H(x) &= \sum_i -x_i + A \cdot \sum_{i_1, j_1} \sum_{i_2, j_2} x_{i_1, j_1} x_{i_2, j_2} \cdot \\ &I[i_1 = j_2 \lor j_1 = j_2 \lor ((i_1, i_2) \in E_1 \land (j_1, j_2) \notin E_2) \\ &\lor ((i_1, i_2) \notin E_1 \land (j_1, j_2) \in E_2)]
\end{aligned}
\end{equation*}

\section{Related Work}
In this paper, we propose the concept of \textit{Semi-Symmetries} in QUBO matrices $Q$ and an algorithm for factoring them out into ancilla qubits to reduce the number of couplings and therefore the number of CNOT gates and circuit depth in QAOA and the chain length in QA. There are already two well-known types of symmetries in QUBO matrices: \textit{bit-flip-symmetry} and \textit{qubit-permutation-symmetry} \cite{shaydulin2021error,shaydulin2021exploiting,shaydulin2020classical}. Symmetry is defined here regarding the solution vectors $\{x\}$ and their associated energies $\{x^TQx\}$.

\subsection{Bit-flip-symmetry}

\textit{Bit-flip-symmetry} denotes the property of QUBOs that the inverse bit vector $x_I = 1 - x$ to a bit vector $x$ both have the same energy: ${(x_I)}^TQx_I = x^TQx$. \textit{Bit-flip-symmetries} occur, for example, in the Max-Cut problem:
\[
H(x) = \sum_{(i,j) \in E} - x_i - x_j + 2 x_i x_j
\]
\ \\
\textit{Bit-flip-symmetry} can be identified in a QUBO matrix $Q$ by substituting $x_i \gets (1 - x_i)$ and $x_j \gets (1 - x_j)$:
\begin{equation*}
\begin{aligned}
    H(x) &= \sum_{(i,j) \in E} - (1 - x_i) - (1 - x_j) + 2 (1 - x_i) (1 - x_j) = \\ &= \sum_{(i,j) \in E} -2 + x_i + x_j + 2 (1 - x_j - x_i + x_i x_j) = \\ &= \sum_{(i,j) \in E} - x_i - x_j + 2 x_i x_j
\end{aligned}
\end{equation*}
\ \\
Since the energy stays the same the QUBO contains a bit-flip-symmetry. Eliminating \textit{bit-flip-symmetry} can be done by removing the last qubit and assigning it the value $0$. Then, the remaining $(n-1) \times (n-1)$ QUBO matrix still encodes the original Hamiltonian.

\subsection{Qubit-permutation-symmetry}
Qubits $i$ and $j$ are \textit{qubit-permutation-symmetrical} if they have the same coupling values to all other qubits, i.e.:
\[
\forall \: k \in [1..n] : Q_{i,k} = Q_{j,k}
\]
This implies that for all $x^{(i=1, j=0)}$ it holds:
\[ H(x^{(i=1, j=0)}) = H(x^{(i=0, j=1)}) \] We use the notation $x^{(i=1, j=0)}$ for an arbitrary solution vector $x$ with qubit $i$ having value $1$ and qubit $j$ having value $0$. However, a trivial reduction of such a QUBO is not possible, since there are $3$ cases that have different energies: $x^{(i=0, j=0)}$, $x^{(i=1, j=0)}$ and $x^{(i=1, j=1)}$.

\subsection{Choosing a value for $p$}
Several works \cite{niu2019optimizing,pan2022automatic,ni2023more} have analyzed the influence of circuit depth on the performance of QAOA. Note that \textit{depth} is sometimes used synonymously with the number of layers, which we refer to as $p$. In this paper, we exclusively refer to \textit{depth} as the depth of the transpiled quantum circuit. To select the optimal number of repetitions \textit{p}, several approaches have been proposed for automatically setting this hyperparameter \cite{pan2022automatic,ni2023more,pan2022efficient,lee2021parameters}. In our experiments, we always used $p = 1$.

\subsection{Other approaches for eliminating couplings in $Q$}

In \textit{Algorithm 1}, the original $Q$ is modified by factoring out \textit{semi-symmetries} into additional ancilla qubits. However, we show that in doing so, the energy landscape for valid solutions is not altered. In contrast, there are heuristic approaches that alter the energy landscape to simplify $Q$. For example, in the paper \cite{sax2020approximate}, an approach was introduced to reduce the number of couplings in a QUBO by simply setting the smallest couplings to $0$ since they have the smallest influence on the energy landscape. By altering the energy landscape in this manner, it can no longer be guaranteed that the optimal solution $x_{mod}^*$ of the modified QUBO $Q_{mod}$ corresponds to the optimal solution $x^*$ of the original QUBO $Q$. 

Ising graphs associated to real-world problems, such as Airport Traffic Graphs, often exhibit a power-law structure \cite{ayanzadeh2023frozenqubits}, where some nodes have many more connections than others. In the paper \cite{ayanzadeh2023frozenqubits}, an approach is presented on how to partition the graphs with respect to these 'hubs'. This eliminates many couplings of the Hamiltonian, and the individual subgraphs can then be solved individually using a divide-and-conquer approach. A detailed analysis of the performance of QAOA depending on the graph structure is provided in \cite{herrman2021impact}. In \cite{ponce2023graph}, an approach is proposed on how large Max-Cut QUBOs can be solved by decomposing them into many smaller QUBOs. A similar approach is pursued in \cite{majumdar2021depth}.

There are already several papers \cite{shaydulin2021error,shaydulin2021exploiting,shaydulin2020classical} that exploit symmetries in QUBOs to generate more efficient and shorter QAOA circuits. In \cite{shaydulin2021error}, a method is proposed for leveraging \textit{bit-flip-symmetry} and \textit{qubit-permutation-symmetry} on Max-Cut graphs. In \cite{shaydulin2020classical} various types of symmetries that are relevant to QAOA and classical optimization problems are discussed. One prominent type is variable (qubit) permutation symmetries, which are transformations that rearrange the qubits of the quantum state without changing the problem's objective function. Such a symmetry can be caused when a graph contains automorphisms (a mapping of the graph to itself). The authors show that if a group of variable permutations leaves the objective function invariant, then the output probabilities of QAOA will be the same across all bit strings connected by such permutations, regardless of the chosen QAOA parameters and depth which can be used to reduce the dimension of the effective Hilbert space.

\section{Algorithm}

We start this section by providing a formal definition of \textit{conflicting qubits} and \textit{semi-symmetries}.

\begin{definition}[Conflicting qubits]
Let $H(x) = x^T Q x$ be the energy of a solution $x$. Qubits $i$ and $j$ are called \textit{conflicting}, iff for every solution $x^{(i=1, j=1)}$ it holds that: \[ H(x^{(i=1, j=1)}) > \{ H(x^{(i=1, j=0)}), H(x^{(i=0, j=1)}), H(x^{(i=0, j=0)}) \} \].
\end{definition}

\begin{definition}[Semi-symmetry]
Conflicting qubits $(i,j)$ are semi-symmetric if and only if:
\[
\exists \: U \subseteq \{1..n\} \backslash \{i,j\} \land |U| \geq 3 : \forall \: k \in U : Q_{i,k} = Q_{j,k} \neq 0
\]
In other words, two \textit{conflicting qubits} $(i,j)$ are \textit{semi-symmetric} iff there are at least $3$ other qubits to which $i$ and $j$ have the same non-zero couplings. This is a weakened definition of symmetry compared to \textit{qubit-permutation-symmetry} where qubits $i$ and $j$ needed the same couplings to \textit{all} other qubits.
\end{definition}

\begin{algorithm}
   \caption{Factoring Semi-Symmetries}
   \label{alg:factoring_syms}
\begin{algorithmic}
   \STATE \textbf{Input:} QUBO matrix $Q$ of size $n \times n$
   \STATE \hspace{13mm} number of ancillas $numAncillas \in \mathbb{N}$
   \STATE \hspace{13mm} parameter $z \in \mathbb{R}^+$
   \STATE
   \STATE $n_{new} = n$ 
   \STATE $cL = \textsc{GetConflictList}(Q, n_{new})$
   \STATE
   \WHILE{$\text{len}(cL) > 0$}
       \STATE $syms, (i,j) = \textsc{GetMostSymQubits}(Q, n_{new}, cL)$
       \IF{$\text{len}(syms) < 3 \textbf{ or } n_{new} = n + numAncillas$}
           \STATE \textbf{break}
       \ENDIF
       \STATE $n_{new} = n_{new} + 1$
       \STATE $Q = \textsc{Enhance}(Q, n_{new}, (i,j), syms)$
       \STATE $cL = \textsc{GetConflictList}(Q, n_{new})$
   \ENDWHILE
   \STATE
   \STATE \textbf{return} $Q$
   \STATE

   \STATE \textbf{function} \textsc{GetConflictList}($Q, n$)
       \STATE $cL = []$
       \STATE $Z = [ \sum_{j \in [1..n], Q_{i,j} < 0} Q_{i,j} \; : \; i \in [1..n] ]$
       \FOR{$i = 1$ \textbf{to} $n$, $j = 1$ \textbf{to} $n$}
           \IF{$i < j \ \textbf{and} \ Q_{i,j} > -Z[i] - Z[j]$}
               \STATE $cL.\text{append}((i,j))$
           \ENDIF
       \ENDFOR
       \STATE \textbf{return} $cL$
   \STATE \textbf{end function}
   \STATE

   \STATE \textbf{function} \textsc{GetMostSymQubits}($Q, n, cL$)
       \STATE $best = (0, 1)$
       \STATE $bestSyms = []$
       \FOR{$(i,j) \in cL$}
           \STATE $syms = [ k \in [1..n] \; : \; Q_{i,k} = Q_{j,k} \neq 0 ]$
           \IF{$\text{len}(syms) \geq \text{len}(bestSyms)$}
               \STATE $best = (i,j)$
               \STATE $bestSyms = syms$
           \ENDIF
       \ENDFOR
       \STATE \textbf{return} $bestSyms, best$
   \STATE \textbf{end function}
   \STATE

   \STATE \textbf{function} \textsc{Enhance}($Q, n, (i,j), syms$)
       \STATE $Q_{i,i} = Q_{i,i} + z$
       \STATE $Q_{j,j} = Q_{j,j} + z$
       \STATE $Q_{n,n} = z$
       \STATE $Q_{i,n} = -2 \cdot z$
       \STATE $Q_{j,n} = -2 \cdot z$
       \STATE $Q_{i,j} = 2 \cdot z$
       \FOR{$k \in syms$}
           \STATE $Q_{k,n} = Q_{i,k}$
           \STATE $Q_{i,k} = 0$
           \STATE $Q_{j,k} = 0$
       \ENDFOR
       \STATE \textbf{return} $Q$
   \STATE \textbf{end function}
\end{algorithmic}
\end{algorithm}

\subsection{Proof-of-Concept Example}
In the following section, we demonstrate our algorithm for a simple proof-of-concept example. To do this, we consider the following graph:

\begin{figure}[H]
 \centering
  \includegraphics[scale=0.27]{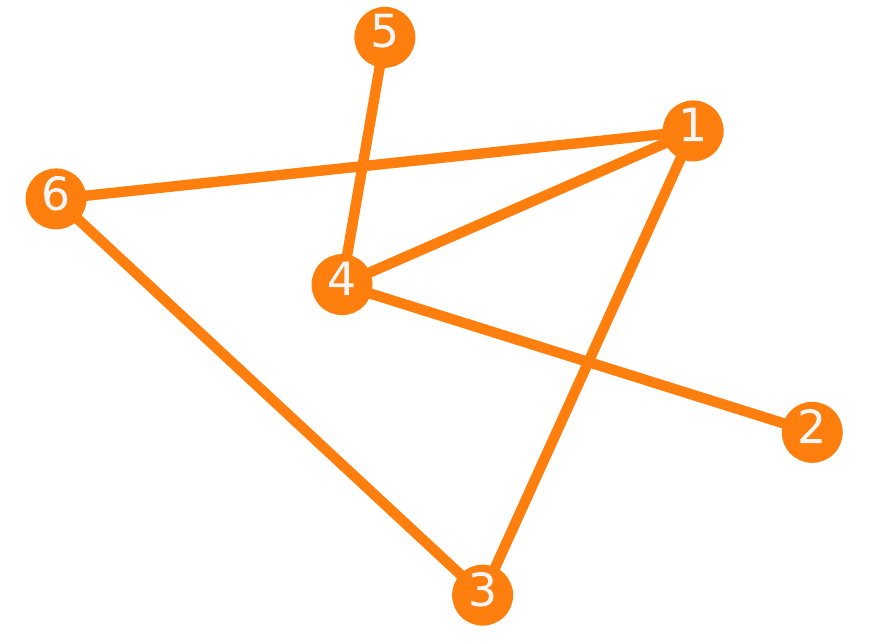}
  \caption{A simple proof-of-concept graph}\label{fig:main_charts}
\end{figure}

We now want to find the largest clique (Maximum Clique) for the graph $G=(V,E)$ in Figure 1, i.e. the largest set of nodes for which each pair of nodes is connected by an edge. The Hamiltonian that encodes this problem is given by :

\[
H = \sum_i -x_i + \sum_{(i,j) \in \overline{E}} 3 x_i x_j
\]

The QUBO matrix $Q$ for Maximum Clique and the graph from \textbf{Figure 1} is listed in \textbf{Table I (upper)}. It requires $6$ qubits and $9$ couplings. $Q$ contains a \textit{semi-symmetry} between qubits $2$ and $5$ which can be factored out into an additional ancilla qubit $7$ (see \textbf{Table I (lower)}). The modified QUBO matrix $Q_{mod}$ requires $7$ qubits but only $8$ couplings.

\begin{figure*}[t!]
\centering
\minipage{0.99\textwidth}
  \centering
  \includegraphics[width=\linewidth]{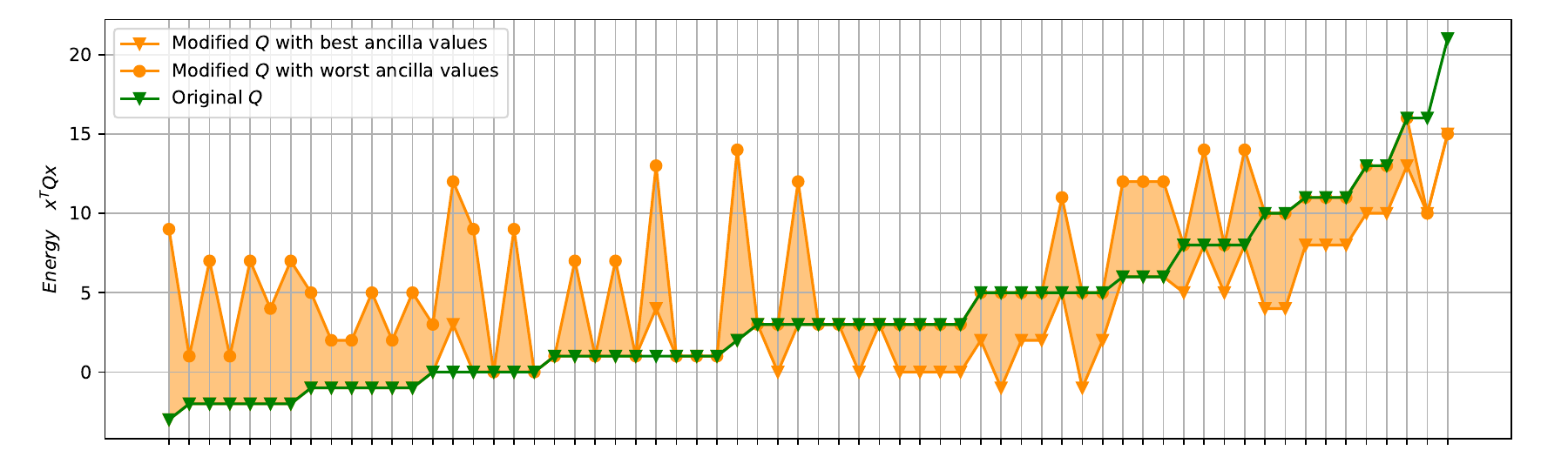}
  \includegraphics[width=\linewidth]{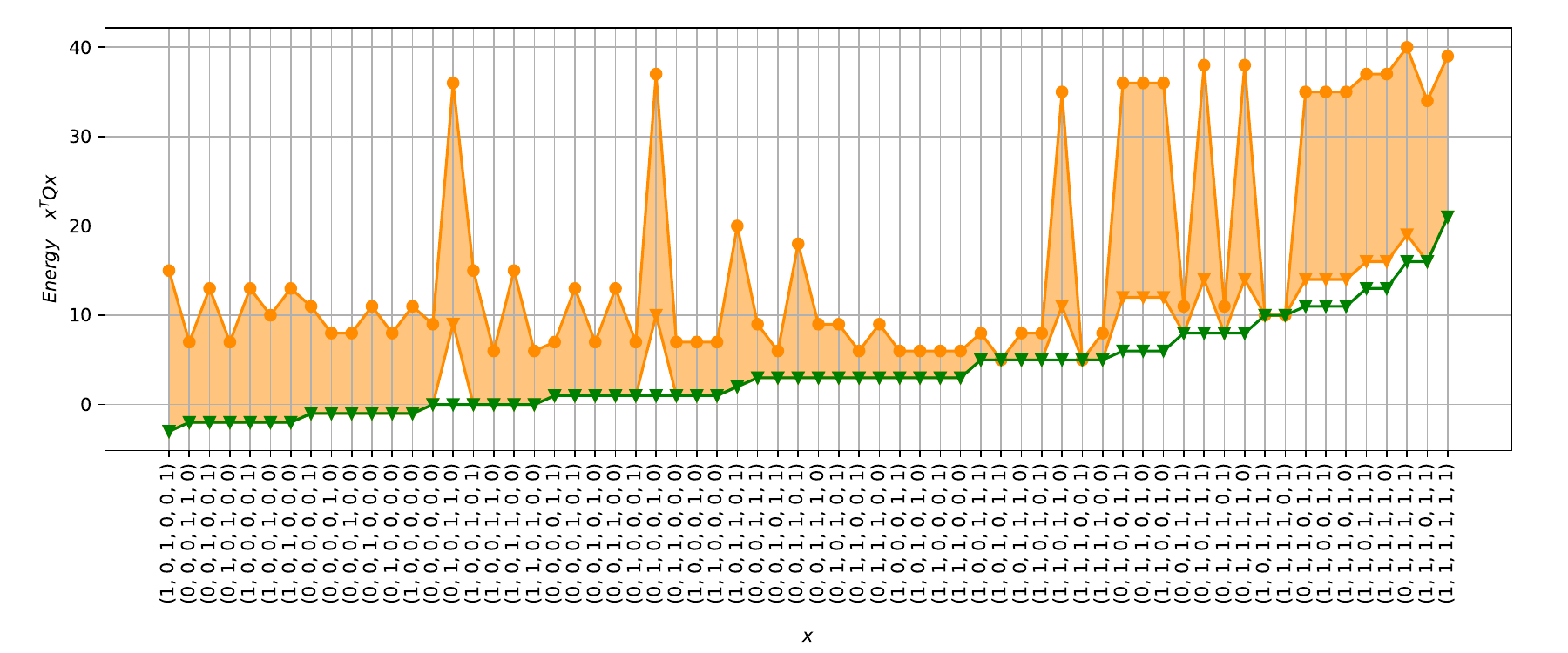}
  \caption{The green line represents the sorted energy spectrum of the left QUBO in Table I. The orange lines represent the energy spectrum of the right QUBO in Table I with the lower graph in both plots being the energetically more favorable choice of the ancilla value while the upper graph in both plots representing the energetically less favorable choice. The upper plot represents the energy spectrum of $Q_{mod}$ with $z=3$ and the lower plot with $z=9$. For $z=3$ we can see that invalid solutions can have a lower energy in $Q_{mod}$ than in $Q$ but if we increase $z$ all invalid solutions have an energy equal or higher than in $Q$. But even $z=3$ is already sufficient for the global optimum $x^*$ in $Q$ to also be the global optimum in $Q_{mod}$.}\label{fig:main_charts}
\endminipage
\end{figure*}

\begin{table}[H]
  \centering
    \begin{tabular}{|l|l|l|l|l|l|} \hline 
         -1& 3& & & 3& \\\hline
         & -1& 3& & 3& 3\\\hline
         & & -1& 3& 3& \\\hline
         & & & -1& & 3\\\hline
         & & & & -1& 3\\\hline
         & & & & & -1\\\hline
    \end{tabular}
    \vspace{1em}
    \vspace{1em}
    \begin{tabular}{|l|l|l|l|l|l|l|} \hline 
         -1& & & & & & 3\\\hline
         & 2& & & 9& & -6\\\hline
         & & -1& 3& & & 3\\\hline
         & & & -1& & 3& \\\hline
         & & & & 2& & -6\\\hline
         & & & & & -1& 3\\\hline
         & & & & & & 3\\\hline
    \end{tabular}
    \caption{(Upper) QUBO matrix $Q$ for Maximum Clique and the graph in Figure 1. (Lower) Modified QUBO matrix $Q_{mod}$ of $Q$ using \textbf{Algorithm 1}. The \textit{semi-symmetry} between qubits $2$ and $5$ was factored out into an additional ancilla qubit.}
\end{table}

In the following section, we theoretically show that our algorithm doesn't change the energy landscape for valid solutions and in section \textit{4.3} we analyze the energy spectra for both QUBO matrices in \textit{Table I}.

\subsection{Theoretical Analysis for Correctness}
We can prove that our modified QUBO $Q_{mod}$ has the same optimal solutions as $Q$ with the best choice of ancilla values, i.e. \textbf{Algorithm 1} doesn't change the energies of valid solutions and doesn't decrease the energy of invalid solutions. Valid solutions $x$ are bit-vectors that don't violate \textit{conflicting qubit} constraints, i.e. if $(i,j)$ are conflicting then $x_i = 0$ or $x_j = 0$. Invalid solutions are bit-vectors with $x_i = 1$ and $x_j = 1$. 

\begin{proposition}
If we choose $z = \sum_{(i,j)} |Q_{i,j}|$, valid solutions $x$ have the same energy regarding $Q$ as to $Q_{mod}$ with the best values for the ancilla qubits $x_{mod} = x + [x_a]$. The energy of invalid solution doesn't decrease with respect to $Q_{mod}$ even with the best ancilla values.
\end{proposition}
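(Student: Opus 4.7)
The plan is to reduce the statement by induction on the number of \textsc{Enhance} calls to a single invocation on a conflicting semi-symmetric pair $(i,j)$ with fresh ancilla index $a$, and then to analyze pointwise the difference
\[
\Delta(x, x_a) \;:=\; H_{\text{mod}}(x, x_a) \;-\; H(x).
\]
Writing $q_{ij}$ for the \emph{original} value of $Q_{i,j}$ and $q_k := Q_{i,k} = Q_{j,k}$ for $k \in \mathit{syms}$, reading off the edits made by \textsc{Enhance} gives $\Delta = P + S$, with the penalty block
\[
P \;=\; z\,x_i + z\,x_j + z\,x_a - 2z\,x_i x_a - 2z\,x_j x_a + (2z - q_{ij})\,x_i x_j
\]
and the rewiring term
\[
S \;=\; \sum_{k \in \mathit{syms}} q_k\,\bigl(x_a - x_i - x_j\bigr)\,x_k .
\]
Both claims of the proposition then follow from a small case analysis on $(x_i, x_j)$ and $x_a$.

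For valid $x$, i.e.\ $x_i x_j = 0$, I would pick $x_a = x_i \vee x_j$, which under $x_i x_j = 0$ equals $x_i + x_j$. Evaluating $P$ on the three allowed triples $(0,0,0)$, $(1,0,1)$, $(0,1,1)$ gives $P = 0$, and $x_a = x_i + x_j$ zeroes the bracket inside $S$, so $\Delta = 0$ and $H_{\text{mod}}(x, x_a) = H(x)$. To see this choice is optimal, I would flip $x_a$: a direct check shows the flip changes $P$ by exactly $+z$, and it changes $S$ by at most $\sum_k |q_k|$ in absolute value, so the assumption $z \geq \sum_{(a,b)} |Q_{a,b}| \geq \sum_k |q_k|$ forces the flipped $\Delta$ to stay $\geq 0$ and hence $\min_{x_a} \Delta(x, x_a) = 0$.

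For invalid $x$, i.e.\ $x_i = x_j = 1$, both ancilla values must be checked. With $x_a = 0$ one gets $P = 4z - q_{ij}$ and $S = -2\sum_k q_k x_k \geq -2\sum_k |q_k|$, so $\Delta \geq 4z - q_{ij} - 2\sum_k |q_k|$. With $x_a = 1$ one gets $P = z - q_{ij}$ and $S \geq -\sum_k |q_k|$, so $\Delta \geq z - q_{ij} - \sum_k |q_k|$. The key observation is that $z = \sum_{(a,b)} |Q_{a,b}|$ already contains both $|Q_{i,k}|$ and $|Q_{j,k}|$ as separate summands for every $k \in \mathit{syms}$, so in fact $z \geq |q_{ij}| + 2\sum_k |q_k|$; both right-hand sides above are therefore non-negative. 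The main obstacle is the tightness of the $x_a = 1$ branch: its margin is the smallest in the whole argument and it is precisely what forces the specific choice of $z$ in the proposition. I would therefore isolate the dominance inequality $|q_{ij}| + \sum_{k \in \mathit{syms}} |q_k| \leq z$ as an explicit lemma, after which validity-preservation and the invalidity lower bound both collapse to arithmetic, and the general statement follows by induction over the successive \textsc{Enhance} calls made by Algorithm~1.
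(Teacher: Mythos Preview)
Your proof is correct and follows essentially the same eight-case analysis on $(x_i,x_j,x_a)$ as the paper, just organized through the clean decomposition $\Delta = P + S$. If anything, you are more careful than the paper's own argument: you explicitly track the $x_k$ dependence in the rewiring sum, the overwritten original coupling $q_{ij}$ in the $(2z - q_{ij})x_ix_j$ term, and the factor of two coming from both $|Q_{i,k}|$ and $|Q_{j,k}|$ appearing in $z$---details the paper glosses over.
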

\begin{proof}
Let $Q$ be any QUBO matrix and $x \in \mathbb{B}^n$ be any solution vector. The energy $E$ for $x$ corresponds to $E = x^T Q x$. Let $(x_i,x_j)$ be a pair of conflicting qubits, i.e. no valid solution $x$ contains assignments $i=1$ and $j=1$ at the same time. Further assume that $(x_i,x_j)$ are semi-symmetrical and \textbf{Algorithm 1} factored out the semi-symmetries into an ancilla qubit $x_a$.

\ \\
\textbf{Case 1: $x_i = 0, x_j = 0, x_a = 0$}: in this case, we can easily see that the energy of $x_{mod} = x + [0]$ regarding $Q_{mod}$ is identical to the original energy: $E_{mod} = {(x + [0])}^T \cdot Q_{mod} \cdot (x + [0]) = E$.

\ \\
\textbf{Case 2: $x_i = 0, x_j = 0, x_a = 1$}: in this case the modified energy corresponds to: $E_{mod} = E + z + \sum_{k \in syms} Q_{i,k}$. Since we can choose $z = \sum_{(i,j)} |Q_{i,j}|$, it holds that $z + \sum_{k \in syms} Q_{i,k} \geq 0$. Therefore: $E_{mod} \geq E$.

\ \\
\textbf{Case 3: $x_i = 1, x_j = 0, x_a = 0$}: $E_{mod} = E + z - \sum_{k \in syms} Q_{i,k}$. Again, since $z = \sum_{(i,j)} |Q_{i,j}|$, it holds that $z - \sum_{k \in syms} Q_{i,k} \geq 0$. Therefore: $E_{mod} \geq E$.

\ \\
\textbf{Case 4: $x_i = 1, x_j = 0, x_a = 1$}: $E_{mod} = E + z + z - 2z + \sum_{k \in syms} Q_{i,k} - \sum_{k \in syms} Q_{i,k} = E$.

\ \\
\textbf{Case 5: $x_i = 0, x_j = 1, x_a = 0$}: analogous to case 3.

\ \\
\textbf{Case 6: $x_i = 0, x_j = 1, x_a = 0$}: analogous to case 4.

\ \\
\textbf{Case 7: $x_i = 1, x_j = 1, x_a = 0$}: $E_{mod} = E + z + z + 2z - \sum_{k \in syms} Q_{i,k} - \sum_{k \in syms} Q_{i,k} > E$.

\ \\
\textbf{Case 8: $x_i = 1, x_j = 1, x_a = 1$}: $E_{mod} = E + z + z + z - 2z - 2z + 2z - \sum_{k \in syms} Q_{i,k} - \sum_{k \in syms} Q_{i,k} + \sum_{k \in syms} Q_{i,k} = E + z - \sum_{k \in syms} Q_{i,k} \geq E$.

\ \\
The best choices for the ancilla qubit for valid solutions are \textit{case 1}, \textit{case 4} and \textit{case 6} which all have energy $E$. Therefore, the energy did not change for valid solutions. For invalid solutions (\textit{cases 7} and \textit{8}) the energy does not decrease.

\end{proof}

\subsection{Empirical Evaluation of the Energy Landscape for the PoC}
We now empirically investigate this theoretical finding in our proof-of-concept example. Since there are $6$ qubits, there are $2^6=64$ possible solutions $x$. For each $x$ we calculated the energy regarding $Q$ (Table I, left), see green lines in Figure 2. Further, we have calculated the energy in the modified QUBO (Table I, right) with both possible values (0 and 1) for the ancilla qubit 7. Then we have plotted for each $x$ the original energy, the energy in the modified $Q_{mod}$ with the worse choice for the ancilla qubit and the better choice for the ancilla qubit.

The upper plot in Figure 2 shows the result with $z=3$, and the lower plot shows the result with $z=9$. We can verify the proposition if we choose $z$ big enough, but often a lower value for $z$ is already enough for the original optimal solution $x$ to also be the optimal solution in $Q_{mod}$.

\section{Experiments}

\begin{figure*}[t!]
\centering
\minipage{1\textwidth}
  \centering
  \includegraphics[width=\linewidth]{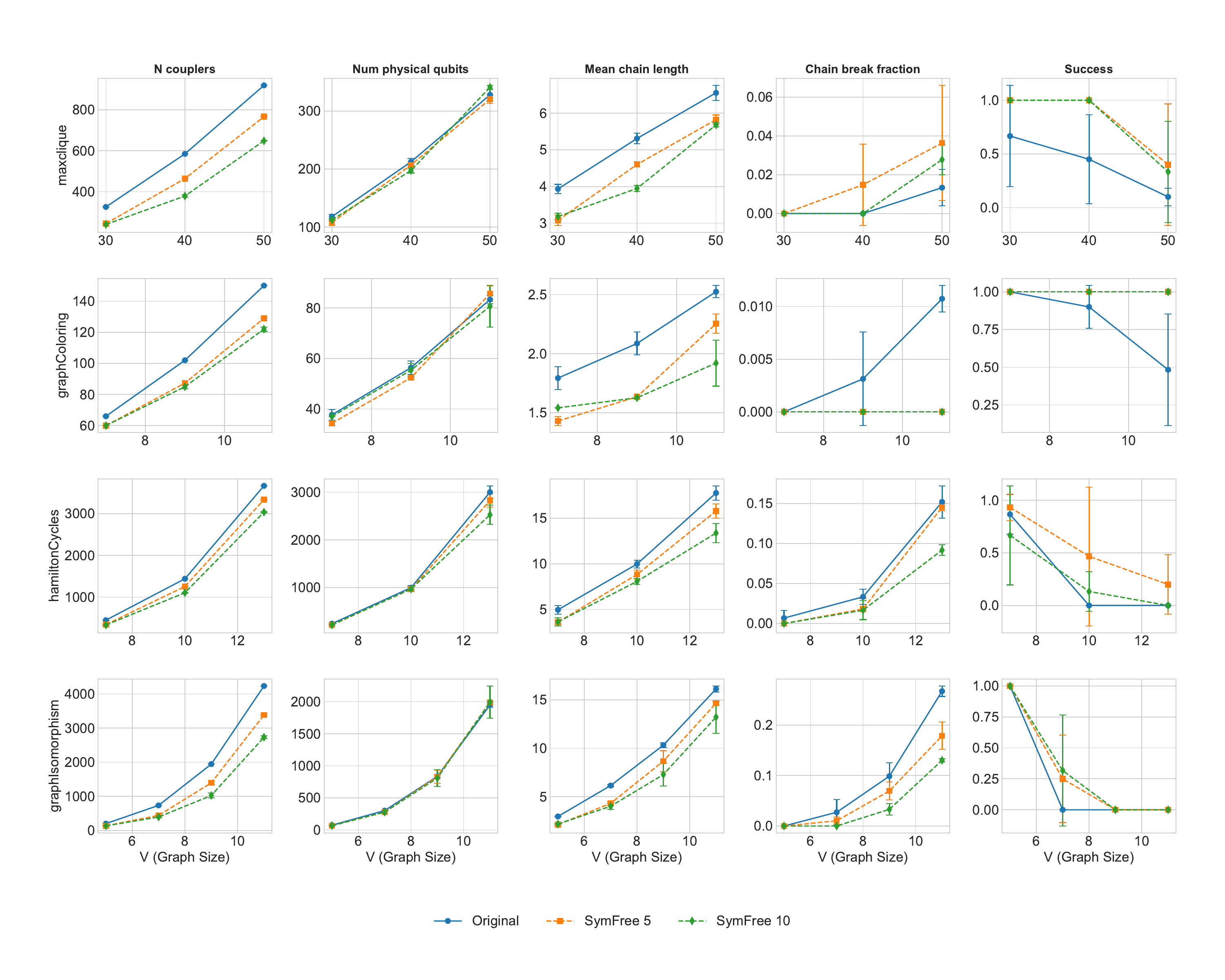}
  \caption{This figure presents the outcomes of our experiments on removing \textit{semi-symmetries} using the algorithm described in Section 3. We evaluated our approach on four optimization problems: Maximum Clique, Graph Coloring, Hamilton Cycles, and Graph Isomorphism. Along the horizontal axis, the plot shows increasing problem size. We considered five metrics: the number of couplings, the number of physical qubits, the mean chain length, the chain break fraction, and the success rate. The success metric equals 1 whenever the annealer reaches the global optimum, as verified by a classical heuristic solver. Our results indicate that removing semi-symmetries reduces the number of couplings, which in turn lowers the mean chain length. This reduction leads to fewer chain breaks and ultimately improves the success rate.}\label{fig:main_charts}
\endminipage
\end{figure*}

We conducted experiments on four representative combinatorial optimization problems — Maximum Clique, Graph Coloring, Hamilton Cycles, and Graph Isomorphism — to evaluate the impact of removing semi-symmetries from QUBO formulations. Each problem instance is characterized by the number of vertices $|V|$ and edges $|E|$, with additional parameters such as the number of colors $k$ for Graph Coloring. Our goal was to assess how introducing additional ancilla qubits (to remove semi-symmetries) influences key hardware-related metrics after embedding the resulting QUBOs onto a quantum annealer.

Figure 3 summarizes our findings. The figure consists of a $4 \times 5$ grid of plots, where each row corresponds to one of the four problems (from top to bottom: Maximum Clique, Graph Coloring, Hamilton Cycles, Graph Isomorphism) and each column represents a different metric. The metrics we considered were: the number of couplers used in the embedded QUBO, the total number of physical qubits, the mean chain length, the chain break fraction, and the probability of a successful solution (Success). The horizontal axis in every subplot denotes the number of vertices $|V|$ for the given problem instances, thus capturing how problem scale affects these metrics. We compared three scenarios for each problem and setting: the original QUBO (blue), and two symmetry-free variants obtained by introducing 5 (orange) or 10 (green) ancilla qubits to remove semi-symmetries.

As the problem size $|V|$ increased, the original QUBO instances tended to produce larger and more complex embeddings, reflected by a higher number of couplers, more physical qubits, and longer chains. These embedding characteristics often led to a higher chain break fraction and, consequently, a lower probability of success. In contrast, when we introduced ancilla qubits to remove semi-symmetries, both the 5- and 10-ancilla configurations showed a noticeable reduction in complexity: we observed fewer couplers, a smaller chain length, and a reduced chain break fraction. This improved embedding quality often translated into a higher success probability for finding the optimal solution, despite the growing complexity of the underlying problem instances.

\begin{figure*}[t]
\centering
\minipage{0.99\textwidth}
  \centering
  \includegraphics[width=\linewidth]{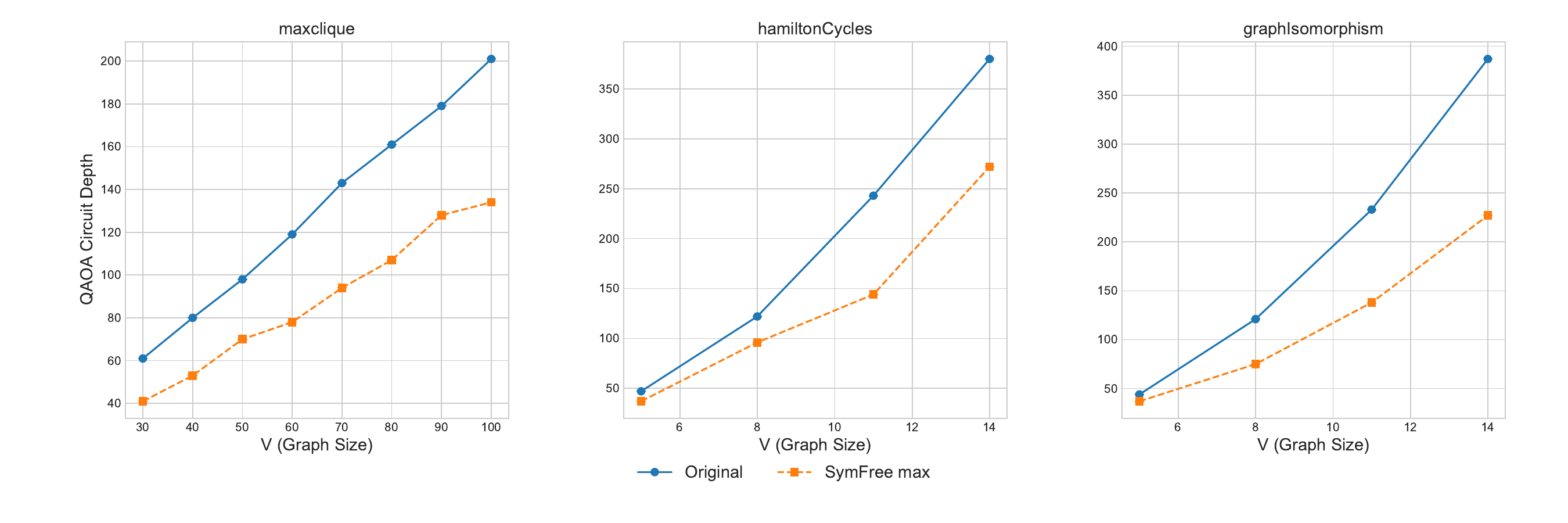}
  \caption{Comparison of QAOA circuit depth with and without semi-symmetry elimination for different graph problems: MaxClique (left), HamiltonCycles (center), and GraphIsomorphism (right). The blue solid line (\textit{Original}) represents the circuit depth without symmetry elimination, while the orange dashed line (\textit{SymFree max}) shows the results after factoring out all semi-symmetries. The reduction in circuit depth becomes more pronounced for larger graph sizes, demonstrating the efficiency of the semi-symmetry elimination approach.}\label{fig:main_charts}
\endminipage
\end{figure*}

Notably, the advantage of the symmetry-free approach became more pronounced at larger problem sizes.

In summary, these experiments demonstrate that leveraging semi-symmetry removal to produce symmetry-free QUBOs can yield more hardware-friendly embeddings. The resulting QUBOs generally require fewer resources (couplers, physical qubits) and produce higher-quality embeddings (shorter chains, lower chain break fractions), ultimately improving the probability of success. By making problem instances more tractable for current quantum annealers, our approach provides a practical path toward better performance on larger and more challenging optimization problems.
\ \\
\ \\
In a second experiment (see Figure 4), we analyzed the transpiled circuit depth of QAOA with $p = 1$. Unlike the first experiment, we factored out all semi-symmetries. In the first experiment, we used a fixed number of semi-symmetries, as eliminating smaller semi-symmetries (e.g. fewer than 10 common couplings) typically does not improve performance on quantum annealers. This is because the additional physical qubits required to represent the new logical (ancilla) qubits often exceed the physical qubits saved by shortening the chains. 

The results demonstrate that the real benefit of our approach becomes evident with larger problem sizes, as these inherently contain more semi-symmetries. As quantum hardware continues to advance, enabling the solution of larger problem instances, this method will become increasingly significant. The ability to effectively handle semi-symmetries at scale promises substantial improvements in circuit depth and resource efficiency, highlighting the long-term relevance of this technique beyond the currently solvable problem instances.

For the empirical evaluation in the first experiment, we used the D-Wave Advantage 4.1 quantum annealer, which features 5760 qubits. Each experimental configuration was executed over 10 runs, and the standard deviation is reported in the results to illustrate variability.

For the QAOA experiments, we transpiled the circuits using Qiskit Aer. However, the transpilation was performed solely to measure circuit depth, as the resulting circuits are too large to be executed or simulated on available quantum hardware.

\section{Conclusion}

In this work, we introduced the concept of semi-symmetries in QUBO matrices and proposed an algorithm for identifying and factoring these symmetries into ancilla qubits. Our method effectively reduces the number of non-zero couplings in the QUBO matrix, which directly translates to improvements in both the Quantum Approximate Optimization Algorithm (QAOA) and Quantum Annealing.

Theoretical analysis confirmed that the modified QUBO matrix $Q_{\text{mod}}$ retains the same energy spectrum as the original matrix $Q$, ensuring the correctness of the optimization problem. Our experimental evaluations demonstrated significant reductions in both computational and physical resource requirements. Specifically, our approach achieved up to a 45\% reduction in couplings and QAOA circuit depth. For Quantum Annealing, the reduced matrix structure led to sparser problem embeddings, shorter qubit chains, and improved overall performance.

The results were validated across a range of combinatorial optimization problems, including Maximum Clique, Hamilton Cycles, Graph Coloring, and Graph Isomorphism, all of which naturally exhibit semi-symmetries. The findings indicate that leveraging such symmetries enhances the scalability and efficiency of quantum optimization algorithms, addressing key challenges such as circuit depth, error accumulation, and embedding complexity.

Looking forward, as quantum hardware continues to advance, the ability to exploit matrix structure for optimization problems will become increasingly crucial. Our method provides a promising step toward making quantum algorithms more practical and scalable for real-world combinatorial problems. Future work will explore further generalizations of semi-symmetry detection, integration with higher-layer optimization frameworks, and broader applicability to other classes of quantum algorithms.

\section*{Acknowledgment}
This publication was created as part of the Q-Grid project (13N16179) under the ``quantum technologies -- from basic research to market'' funding program, supported by the German Federal Ministry of Education and Research.

\bibliographystyle{apalike}
{\small
\bibliography{example}}

\end{document}